\renewenvironment{proof}{\paragraph{Proof} }{\hfill\qed}
\renewcommand{\paragraph}[1]{\noindent\textit{#1}.}
\begin{document}

\title{Optimal Tree Hash Modes: the Case of Trees Having their Leaves at All the Levels}

\author{Kevin Atighehchi}
\institute{Aix-Marseille Univ, CNRS, LIF, Marseille, France\\
\email{kevin.atighehchi@univ-amu.fr}\\
}

\maketitle

\begin{abstract}
A recent work shows how  we can optimize a tree based mode of operation for a hash function where the sizes of input 
message blocks and digest are the same, subject to the constraint that the involved tree structure has all its leaves at the same depth.
In this work, we show that we can further optimize the running time of such a mode by using a tree having leaves at all its levels. We make the assumption that
the input message block has a size a multiple of that of the digest and denote by $d$ the ratio block size over digest size. The running time 
is evaluated in terms of number of operations performed by the hash function, \emph{i.e.} the number of calls to its underlying function.
It turns out that a digest can be computed 
in $\lceil \log_{d+1} (l/2) \rceil+2$ evaluations of the underlying function using $\lceil l/2 \rceil$ processors, 
where $l$ is the number of blocks of the message. Other results of interest are discussed, such as the optimization of the parallel running time
for a tree of restricted height.
\end{abstract}

\keywords{SHA-3, Hash functions, Merkle trees, Parallel algorithms, Sponge functions, Prefix-free Merkle-Damg{\aa}rd}

\section{Introduction}

In the cryptographic hashing context, we are interested in the problem of finding a tree structured circuit topology 
to optimize both the parallel running time and the number of involved processors (\emph{i.e.} in time and width).
We consider hash tree modes using a hash function (or variable-input-length compression function), denoted $f$, 
where the ratio block size over digest size is an integer denoted $d$.
For instance, they can correspond to SBL (single-block-length) hash functions based on a block
cipher having the key and the block of the same size.
We consider that the hash function 
needs only $l$ invocations of the underlying primitive to process a $l$-block message. 
Let us assume a hash tree of height $h$ having all its leaves (\emph{i.e.} message blocks) at the same depth. If we denote
by $a_i$ the arity of level $i$ (for $i=1 \ldots h$) and if $d=1$, then the parallel 
running time to obtain the root node value is $\sum_{i=1}^h a_i$. A recent work~\cite{AtRo17,AR15} shows that we can select the good parameters
to construct such trees that minimize both the running time and the number of processors.
The aim of the present paper is to show that we can further decrease the parallel running time of a tree-based hash function
by removing this structural constraint on the tree. We then remark that the allocation of tasks to the processors is a bit more subtle, and 
that the parallel running time is no longer the sum of the level arities.
More particularly, our contributions are the followings:
\begin{itemize}
 \item 
 We first recall that it is possible to design a hash function whose implementation will behave like an 
 idealized hash function from the rate standpoint. In particular, assuming precomputations, 
 this hash function requires $l$ calls to the underlying primitive to process a message of $l$ blocks. 
 This resulting sequential hash function is then used as building block for tree hashing.
 \item We show the parallel running time which can be obtained using hash trees of smallest height. In particular, we state a result
 in which both the running time and the number of involved processors are optimized. A tree of minimized height has the benefit of minimizing the memory consumption.
 \item We then address the case of trees of unrestricted height. We show the optimal parallel running time which can be obtained in this case and discuss 
 how the number of involved processors can be decreased without changing this running time.
 \item We finally consider a situation in which we have a \emph{bounded parallelism}. We show the optimal parallel running time which can be obtained with a fixed number of
 processors.
\end{itemize}

This paper is organized in the following way. We give some definitions about trees and hash functions in Section \ref{terminology} and
we address the optimization of tree constructions suitable for parallel hashing in Section \ref{kangaroo}. 

\section{Terminology and background information}\label{terminology}

Throughout this paper, we use the 
convention\footnote{This corresponds to the convention used to describe Merkle trees. The other (less frequent)
convention is to define a node as being 
an $f$-input.
} that a node is the result of
an inner function $f$ called on a data composed of the node's children.
A node value then corresponds to an image by such a function and a child of this node can be either
an other image or a message block.
In this paper, a $k$-ary tree of height $h$ is a tree having the following properties:
\begin{itemize}
 \item The root node (at level $h$) can be of arity $a$, with $1 < a \leq k$.
 \item A level $i$ ($\neq h$) has all its nodes of arity $k$, except the rightmost one that can be of smaller arity.
\end{itemize}
We define the arity of a level in the tree as being the greatest node arity in this level.~\\

Let us denote the block size and the digest size $N_b$ and $N_o$ respectively. We make the assumption that $d=N_b/N_o$ is a positive integer.
A node in the tree is computed using an inner VIL function that iteratively processes message blocks of size $N_b$ bits using an 
underlying function (a block cipher, a permutation or another compression function) and produces a digest of $N_o$ bits. 
The underlying function is considered
to be the lowest level function.
For instance, the hash function Skein \cite{FLSWBKCW09} is based on a VIL compression function, itself based on a lowest level primitive, 
the tweakable block cipher \emph{Threefish}.

We assume that the evaluation of the inner function requires a number of calls to its underlying function equal to the number of blocks of the message. 
At first sight we could think that this kind of primitive is rare since: ($i$) there is usually a padding which is done at the end of the message. 
For certain message sizes, this padding requires one more call to the underlying function; ($ii$) In the hash functions like SHA-1 and SHA-2, 
the MD-strengthening add another block containing the message size.
However, we show that we can construct an inner VIL function that can satisfy a running time of $l$ unit of times for a message of $l$ blocks.
Besides, some existing inner functions are already of this type, such as the VIL compression function based on the UBI (Unique Block Iteration) chaining mode of 
Skein \cite{FLSWBKCW09} and some other \emph{single-block-length} hash functions \cite{PGV94,KM07}.

In this paper, the time complexity corresponds to the number of 
evaluations of the lowest level function and we use the term \emph{unit of time} for one evaluation of such a function.




\subsection{Computation model}

We use the PRAM (Parallel Random Access Machine) model of computation, assuming the strategy CREW (Concurrent Read Exclusive Write), although 
in the context of hashing, the strategy EREW can be enough in most cases.

Except when otherwise specified, the parallel running time corresponds to the running time when the number of processors is 
not \emph{a priori} bounded. As a consequence, the message is supposed to be already avalaible.
In the hash tree constructions that we propose, if the number of chaining values is denoted $n_{cv}$ and if
the root node is not counted as such, then the number of processors is equal to $n_{cv}+1$. Indeed, the chaining values are computed by distinct processors.


\subsection{Concretizing an idealized rate for the inner function}

According to Bertoni \emph{et al.} \cite{BDPV14_Suf}, the tree-based operating mode 
is indifferentiable from a random oracle if their three conditions are fulfilled and if the operated inner 
function is indifferentiable from a random oracle.~\\

If we want to use a hash function based on the Merkle-Damg{\aa}rd construction, we need to use the modifications proposed 
by Coron \emph{et al.} \cite{CDMP05}, which ensure that the inner function will be indifferentiable from a random oracle.~\\

\paragraph{\textbf{Inner function based on the \emph{prefix-free} MD}} We choose to use a modification of the Merkle-Damg{\aa}rd construction, proposed by 
Coron \textit{et al.} \cite{CDMP05}. Let us denote $M$ a message to hash, 
padded with a bit 1 and the minimum\footnote{possibly 0.} number of bits 0, such that the length of the padded message is a multiple of $N_b$.
The modification consists, before applying the MD algorithm, in prepending to this padded message a block containing the length of $M$ in bits. 
We denote by $f'$ the resulting hash function. Coron \textit{et al.} show that $f'$ is indifferentiable from a random oracle.

Let us suppose that the inner function used in the tree-based hash function is $f'$. 
According to Bertoni \emph{et al.} \cite{BDPV14_Suf}, prepending two bits to the intputs to $f'$ is sufficient to ensure the indifferentiability of 
the resulting tree-based hash function. 
The values of these bits depend on the type of $f'$-input, \emph{i.e.} the location of the input in the tree topology. In this paper, we choose to use
$N_b-1$ bits to encode the type of $f'$-input, where only two bits can be non-zero. For instance the binary encoding can be $b_0b_10^{N_b-3}$, where the values of $b_0$
and $b_1$ depend on the type of $f'$-input. We can remark that considering the prefix-free encoding and this second encoding, the first bit of the message is at the 
end of the second block. Our argument is that we can precompute all the possible hash states that result from the processing of the second block. If the number of 
possible input sizes (before padding or any prepending) is $k$ then the number of precomputed hash states is exactly $8k$. Thus, with these precomputed values,
the running time to process with $f'$ an input that can fit into $s$ blocks is exactly equal to $s$ units of time.~\\

\paragraph{\textbf{Inner function based on a sponge construction}} We could use a hash function like Keccak \cite{BDPV13_keccak} which does not require to embed 
the message size in the input.
This hash function, constructed on top of a permutation, uses a padding $10^*1$ at the end of the message so that 
the message size in bits corresponds to a multiple of the block size. 
More precisely, the appending consists of the bit 1, followed by the minimum number (possibly 0) of bits 0, followed by a bit 1. As in the previous solution,
the computation of the nodes requires to format appropriately the inputs of this function, using the necessary encoding (at least two bits) 
for sound tree hashing \cite{BDPV14_Suf}. The trick is to prepend to the input an encoding consisting of $N_b-2$ bits, where only $2$ bits can be non-zero 
(the values of these bits depend
on the type of input). We thus observe that the two first bits of the message are at the end of the first block. We can precompute all the possible hash states resulting from
the processing of this first block. There are four possibilities for the two first bits of the message, and four possibilities for the choice of the encoding, 
for a total of sixteen possible hash states. ~\\

\paragraph{\textbf{Inner function based on the compression function of Skein}} Skein \cite{FLSWBKCW09} uses a variable-input-length compression function which requires $l$ invocations to
the tweakable block cipher \textit{threefish} to compress a message of $l$ blocks. This ideal rate is due to the fact that the message is not padded 
when it is already a multiple of the block size.
The information of the lack of padding is included in the tweak, thus providing the same functionality as reversible padding. 
This compression function is indifferentiable of a random oracle if \textit{Threefish} 
acts as an ideal cipher.
As in the case before, we just have to prepend to the message an encoding of $N_b$ bits in order to distinguish $4$ types of input to the compression function.
In fact, only two are sufficient and the remaining bits serves only to reduce the number of hash states to precompute, \emph{i.e.} $4$ hash states.

\section{Optimal trees having their leaves at all the levels}\label{kangaroo}

The idea of processing both message blocks and chaining values (non-leaf nodes, \emph{i.e.} digests) using a single inner function evaluation 
was suggested in \cite{BDPV14_Sak} (under the name of \emph{kangaroo hopping}) in order to avoid certain computation overheads. 
With our assumptions, we first apply this idea for all nodes of a tree of restricted height with the aim of optimizing the parallel running time, 
and then we apply it to the case of trees of unrestricted height.

In the following results, the considered inner function has an idealized running time and is devoid of the padding overhead. 
If the padding is not neglected, the number of processors is underestimated: 
\begin{itemize}
 \item If the inner function is the sponge-based construction defined above, the number of involved processors should be multiplied by four, 
 because two bits have to be guessed at each parallel step.
 \item If the inner function is the prefix-free MD construction defined above, the number of involved processors should be multiplied by two, 
 because one bit has to be guessed at each parallel step.
\end{itemize}
For the sake of simplification, we first focus on the case $d=1$. The case $d>1$ is discussed in the subsequent subsection.

\subsection{Case $d=1$ (or $N_b=N_o$)}

\paragraph{\textbf{An algorithm for a tree of height 2}} Let us consider a message of size $l$, whose blocks are denoted $m_1$, $m_2$, $\ldots$, $m_l$. 
The processors are indexed $P_i$ with $i \geq 1$, and we make the assumption that they start their computations at the same time. The message is subdivided in chunks of 
variable size to be distributed to each processor:
\begin{itemize}
 \item $P_1$ and $P_2$ each receives a chunk of $2$ blocks and applies the inner function on these chunks.
 $P_2$ computes the hash of $m_3\|m_4$, while $P_1$ computes the hash of $m_1\|m_2$ without finalizing it. In other words, $P_1$ prepares to receive further consecutive 
 blocks. We denote by $c_2$ the digest computed by~$P_2$.
 \item As long as there remains message blocks, $P_i$ (with $i \geq 3$) receives $i$ blocks and applies the inner function on their concatenation. 
 We denote by $c_i$ the resulted digest computed by $P_i$ for $i \geq 3$. Note that $P_i$ can possibly process less than $i$ blocks if the end of the message is reached.
 \item $P_1$ continues to evaluate the inner function on the collected digests $c_2$, $\ldots$, $c_k$ as they arrive. The evaluation of the inner function is resumed 
 immediately when a digest $c_i$ is available.
 \item Assuming that $P_k$ is the last processor that has received blocks, the final digest computed by $P_1$ corresponds to the evaluation of the inner function on 
 $$m_1 \| m_2 \| c_2 \| c_3 \| \cdots \| c_{k-1} \| c_k.$$
\end{itemize}

 \begin{figure}[!h]
\centering
\scalebox{0.24}{
 \input{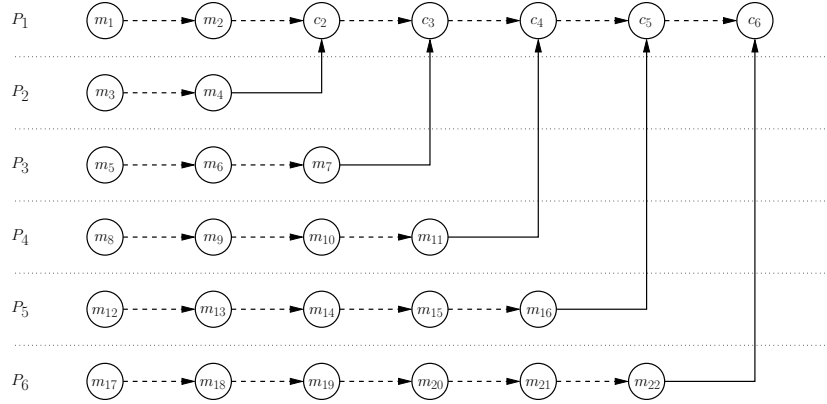}
 }
 \caption{Processing of a message of $22$ blocks using $8$ processors, denoted $P_1$, $P_2$, ..., $P_6$. We can see that $P_1$ and $P_2$ each process $2$
 blocks of the message, while $P_i$, with $i \geq 3$, processes $i$ blocks of the message. The chaining values $c_2$, $c_3$, ..., $c_6$ 
 are collected and processed by $P_1$ as soon as they are computed.}
 \label{optimal_two_level_tree}
\end{figure}

An example of execution of this algorithm is depicted in Figure \ref{optimal_two_level_tree}. 
The running time of this hash function is the running time for computing $c_k$ plus one, \emph{i.e.}, $k+1$ units of time. 
If the last processor receives a single block, this one can be processed by the first processor in order to save one processor, while leaving 
unchanged the running time of $k+1$. Note that $k$ is such that $\sum_{i=1}^k i \geq l-1$, \emph{i.e.} such that $k^2+k-2(l-1) \geq 0$. This inequation has 
two solutions $\frac{-1 \pm \sqrt{8l-7}}{2}$, of which only one is positive for $l \geq 1$. The solution 
is then $k = \left\lceil \frac{-1 + \sqrt{8l-7}}{2} \right\rceil$. 
Among the tree structures of height 2, the one used in the algorithm above leads to an optimal
parallel running time. While conserving this running time, one may
desire to decrease the number of involved processors.

\begin{theorem}
 Let a message of length $l$ blocks such that $l \geq 2$. We can construct a hash tree of height 2 
 allowing a parallel running time of $k+1$ units of time, using 
 $k-i+2$ processors, where 
 $$k=\left\lceil \frac{-1+\sqrt{4i^2-12i+8l+1}}{2} \right\rceil = \left\lceil \frac{-1+\sqrt{8l-8}}{2} \right\rceil$$
 and
 $$i=\max_j\ \mathrm{argmin}_j \left\lceil\frac{-1+\sqrt{4j^2-12j+8l+1}}{2}\right\rceil<\frac{\sqrt{4+4\sqrt{8l-8}}+3}{2}.$$
\end{theorem}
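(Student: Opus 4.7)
The plan is to extend the height-$2$ algorithm described just above the statement by letting the root processor $P_1$ absorb $i\geq 2$ initial message blocks (rather than exactly $2$) and then collect the chaining values from the remaining processors as they arrive. The total processor count is then $k-i+2$, where $k-i+1$ non-root processors (call them $Q_1,\dots,Q_{k-i+1}$) produce the chaining values consumed by $P_1$; the parallel running time is the time for $P_1$ to complete its evaluation.

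First I would determine how many blocks each non-root processor may handle without delaying $P_1$. Since $P_1$ processes its $i$ initial blocks during time steps $1,\dots,i$ and then processes the chaining value produced by $Q_j$ at time $i+j$, the processor $Q_j$ must deliver its result by time $i+j-1$ and can therefore hash up to $i+j-1$ blocks. Summing gives a total handling capacity of
\begin{equation*}
i+\sum_{j=1}^{k-i+1}(i+j-1)=i(k-i+2)+\frac{(k-i)(k-i+1)}{2}.
\end{equation*}
Imposing that this be $\geq l$ and solving the resulting quadratic in $k$ yields $k^2+k\geq 2l+i^2-3i$, from which the first expression for $k$ in the statement follows directly. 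The running time is then $i+(k-i+1)=k+1$, achieved when $Q_{k-i+1}$ operates on exactly $k$ blocks.

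To minimize the running time, I would fix the smallest admissible $k$ and observe that, for $i\geq 2$, the map $i\mapsto i^2-3i$ is strictly increasing, so any increase of $i$ can only increase the required $k$. Setting $i=2$ recovers the second expression $k=\lceil(-1+\sqrt{8l-8})/2\rceil$. To shrink the processor count while preserving the running time $k+1$, I would then take the largest integer $i$ for which this same value of $k$ still satisfies $k^2+k\geq 2l+i^2-3i$; this is precisely the $\max_j\mathrm{argmin}_j$ quantity. Solving the inequality for $i$ gives $i\leq(3+\sqrt{(2k+1)^2+8-8l})/2$.

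The only delicate step, and the main obstacle, is turning this data-dependent bound into the closed form of the theorem. Writing $x=(-1+\sqrt{8l-8})/2$ so that $k=\lceil x\rceil$, the ceiling incurs a strict slack $k<x+1$, whence $(2k+1)^2<(2x+3)^2=8l-4+4\sqrt{8l-8}$. Substituting into $i\leq(3+\sqrt{(2k+1)^2+8-8l})/2$ yields the strict bound $i<(3+\sqrt{4+4\sqrt{8l-8}})/2$ stated in the theorem, which finishes the argument.
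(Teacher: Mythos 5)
Your proposal is correct and follows essentially the same route as the paper: the same load assignment ($P_1$ takes $i$ blocks plus $k-i+1$ chaining values, the $j$-th helper takes $i+j-1$ blocks), the same quadratic inequality $k^2+k\geq 2l+i^2-3i$, and the same necessary upper bound on $i$ obtained from the slack $k<x+1$ (the paper phrases this as $f(i)<f(3/2)+1$ for $f(x)=\frac{-1+\sqrt{4x^2-12x+8l+1}}{2}$, which is the identical computation). The only micro-step you assert without justification is that setting $i=2$, which gives $\bigl\lceil(-1+\sqrt{8l-7})/2\bigr\rceil$, ``recovers'' $\bigl\lceil(-1+\sqrt{8l-8})/2\bigr\rceil$; this does hold because $8l-8$ is even while the ceiling can only jump at odd perfect squares, a parity remark the paper is equally silent about.
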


\begin{proof}
 We seek to maximize $i$ and minimize $k$ such that Processor $P_1$ processes $i$ message blocks and $k-i+1$ chaining values, and
 $P_2$, $P_3$, ..., $P_{k-i+1}$, $P_{k-i+2}$ process respectively $i$, $i+1$, ..., $k-1$, $k$ message blocks. The final digest computed by $P_1$
 corresponds to the evaluation of the inner function on
 $$m_1\|m_2\|\cdots \| m_i \| c_i \|c_{i+1} \| \cdots \| c_{k-1}\|c_{k}.$$
 Thus, we seek to minimize $k$ such that $\sum_{j=i-1}^{k}j \geq l-1$. We have to solve the inequation $k^2-k-i^2+3i-2l \geq 0$. The discriminant 
 $\Delta=4i^2-12i+8l+1$ is strictly positive iff $i^2 - 3i + 2l +1/4 \geq 0$. Since this last inequality is verified for $n \geq i \geq 1$, we have two
 solutions of which only one is positive. We deduce that $k= \left\lceil \frac{-1 + \sqrt{4i^2-12i+8l+1}}{2} \right\rceil$
 for $n \geq i \geq 1$. Let us consider the function $f(x)=\frac{-1+\sqrt{4x^2-12x+8l+1}}{2}$, whose derivative is $f'(x)=\frac{2x-3}{\sqrt{8l-4x^2-12x+1}}$.
 Solving the equation $f'(x)=0$ leads to the solution $x=3/2$. Since $f(x)$ is increasing for $x > 3/2$, we now have to seek the maximum integer $x$ satisfying
 $\left\lceil \frac{-1 + \sqrt{4x^2-12x+8l+1}}{2} \right\rceil=\lceil f(3/2) \rceil$. We can upper bound $x$ such that $f(x) < f(3/2) + 1$,
 leading to the expected result.
\end{proof}
~\\

According to \cite{AB16}, for a tree of height $k$, 
the optimal parallel running time is in $O(l^{\frac{1}{k}})$, where $l$ is the size of the message. 
We recall that this result was shown for both the hashing of stored content (the size of the message 
has to be known in advance) and the hashing of live-streamed content.
The construction above, which supports the processing of live-streamed content, does not contradict this result.
Anyway, we recall that in our settings, the message is supposed to be already available, and thus the need of the message size as input to the algorithm 
does not matter.
We see here that the optimization of such a tree using both \emph{kangaroo hopping} and increasing input sizes is interesting.
One advantage of using a tree of restricted height is its limited memory usage in a sequential execution of the algorithm. 
If memory usage for a sequential execution is not a concern, we can consider trees of unrestricted height.

\begin{theorem}\label{Theorem_opt}
 Let a message of length $l$ blocks. We can construct a hash tree allowing a parallel running time of exactly $\lceil \log_2 l \rceil+1$ units of time, using 
 $\lceil l/2 \rceil$ processors.
\end{theorem}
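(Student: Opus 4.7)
The plan is to build the tree recursively around a single ``spine'' processor $P_1$ that absorbs its inputs in sequence via kangaroo hopping, starting with two message blocks and then folding in chaining values produced by recursively constructed subtrees whose sizes grow geometrically. Concretely, set $L := \lceil \log_2 l \rceil + 1$ and have $P_1$ read, in order, the $L$ inputs $m_1, m_2, c_3, c_4, \ldots, c_L$, where for $3 \le i \le L-1$ the value $c_i$ is the root of a recursively built subtree hashing exactly $2^{i-2}$ message blocks, and $c_L$ is the root of a subtree absorbing the remaining $l_L := l - 2^{L-2}$ blocks. The base cases $l \in \{1,2\}$ use a single processor that simply invokes $f$ on the given blocks.

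For the running time I would rely on the elementary observation that a processor reading $L$ inputs sequentially finishes at time $L$ provided its $i$-th input is available by time $i-1$. Inputs $m_1$ and $m_2$ are trivially available at time $0$; for $i \ge 3$ the inductive hypothesis applied to the subtree for $c_i$ (with $l_i \le 2^{i-2}$ blocks) states that its root is produced in at most $\lceil \log_2 l_i \rceil + 1 \le i-1$ time units, so $c_i$ does arrive on time. The definition of $L$ yields $2^{L-2} < l \le 2^{L-1}$, whence $1 \le l_L \le 2^{L-2}$, ensuring that the time budget $L-1$ available for the last subtree is sufficient.

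For the processor count I would proceed by strong induction on $l$. Denoting by $P(l)$ the number of processors used, the construction gives
\[
P(l) \;=\; 1 + \sum_{i=3}^{L-1} P(2^{i-2}) + P(l_L),
\]
which by the inductive hypothesis equals $1 + \sum_{i=3}^{L-1} 2^{i-3} + \lceil l_L/2 \rceil$. The geometric sum collapses to $2^{L-3} - 1$ (and is empty when $L = 3$), and since $2^{L-2}$ is even the identity $\lceil l_L/2 \rceil = \lceil l/2 \rceil - 2^{L-3}$ holds, yielding $P(l) = \lceil l/2 \rceil$ exactly. The main obstacle is the bookkeeping: verifying $l_L \in [1, 2^{L-2}]$ in every regime of $l$, in particular that $\lceil \log_2 l \rceil - 1 < \log_2 l$ really forces $l > 2^{L-2}$, and cleanly handling the small cases $L \le 3$ where the telescoping sum degenerates.
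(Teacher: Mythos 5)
Your proof is correct, and the object you construct is in fact the same tree as the paper's: a node with two message-block leaves whose remaining children are chaining values arriving one per time unit, with subtree sizes doubling for each additional unit of time budget. The difference is purely in the derivation. The paper obtains the tree by surgery on a perfect binary tree of height $\lceil \log_2 l\rceil$ (at each level, every node discards its leftmost child and adopts that child's children), and then argues the schedule somewhat informally ("each parallel step of one unit of time"); you instead build the tree recursively around a spine processor reading $m_1, m_2, c_3,\dots,c_L$ with the $i$-th subtree holding $2^{i-2}$ blocks, and prove both the timing ($c_i$ ready by time $i-1$) and the processor count ($P(l)=1+(2^{L-3}-1)+\lceil l_L/2\rceil=\lceil l/2\rceil$) by explicit strong induction. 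Your version buys a cleaner, fully verifiable account of the scheduling constraint and of why the processor count telescopes to exactly $\lceil l/2\rceil$ (the paper just counts leaf-bearing nodes of the transformed tree); the paper's version buys a picture of where the topology comes from and generalizes immediately to the ternary/arity-$(d+1)$ variants used in the later theorems. The only nitpick is that "exactly $\lceil \log_2 l\rceil+1$" deserves one closing sentence noting that the spine cannot finish before time $L$ since its last input arrives no earlier than time $L-1$; this is immediate from your setup.
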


\begin{proof}
We first give the construction of a tree structure. Then, we consider a hash function based on it, and we give a scheduling strategy to perform all 
the computations in parallel.
Let us consider a binary tree of height $h=\lceil \log_2 l \rceil$. We denote by $l_i$ the number of nodes of level $i \geq 1$. Remark that this 
binary tree can be such that all its leaves are at the same depth and $l_i=\lceil l/2^i \rceil$. 
The $l_i$ nodes of the level $i$ are indexed. For $j=1 \ldots l_i$, one node of this level is denoted $N_j$ and, 
in particular, its leftmost child is denoted $N_{j,LC}$. 
\emph{Note that if $N_j$ has a single child, this latter is still considered as its leftmost child.}
At each level $i$ of this tree, 
starting from level $2$ up to level $h$, we transform the nodes in the following way: 
for $j=1 \ldots l_i$, the node $N_{j,LC}$ is discarded and 
its children become the children of $N_j$.
\emph{We notice that once this operation is performed, a node $N_j$ can have a higher number of children.}
The result is a tranformed tree which is no longer a binary tree and where leaves are located at all the levels. 
An example of execution of this algorithm is depicted in Figure \ref{Derivation_tree}.

We now consider a hash function based on this tree structure. The computations are done in parallel in the following way:
in a same parallel step, each processor starts the computation of one of the $\lceil l/2 \rceil$ nodes that has leaves. This parallel step requires 2 units of time.
Hence, the computations of these nodes (or of their parent nodes) can progress in a parallel step of one unit of time. We need to repeat such a parallel step 
as many times as necessary to complete the processing of this hash tree, \emph{i.e.}, $\lceil \log_2 l \rceil-1$ times. We then deduce a parallel running time
of $\lceil \log_2 l \rceil-1 +2$ units of time. The number of involved processors corresponds to the number of nodes having leaves, \emph{i.e.} $\lceil l/2 \rceil$.
An example of parallel hash computation is depicted in Figure \ref{another_tree_representation}.
\end{proof}


\begin{figure}[h]
  \begin{center}
    \subfloat[Binary tree]{
      \includegraphics[height=2.4cm]{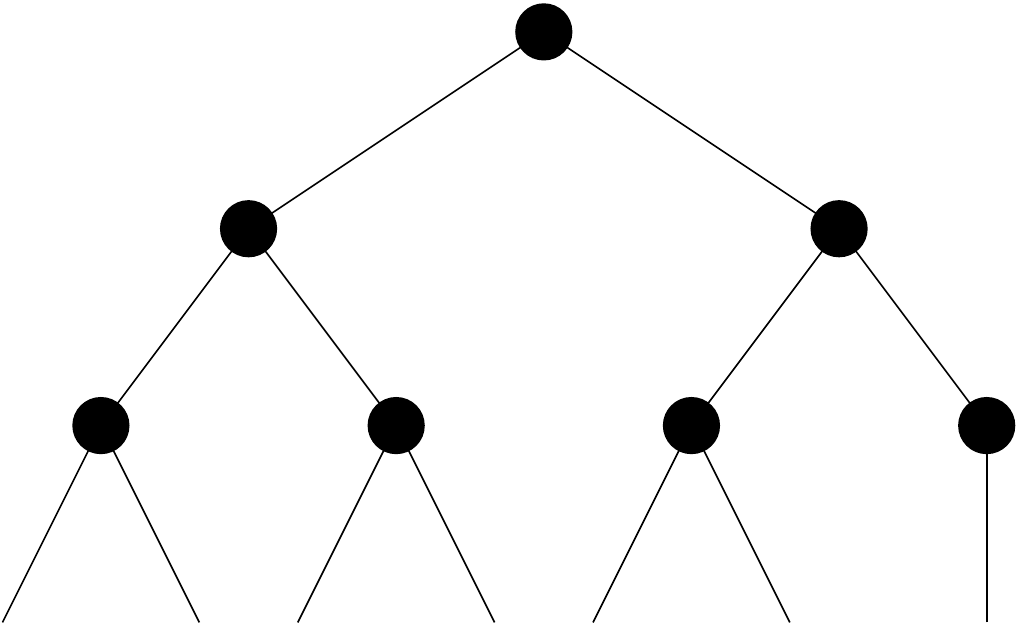}
      \label{sub:init}
                         }
    ~~~~~~~~~~~~~~~~~~~~~~
    \subfloat[Result of the first iteration]{
      \includegraphics[height=2.4cm]{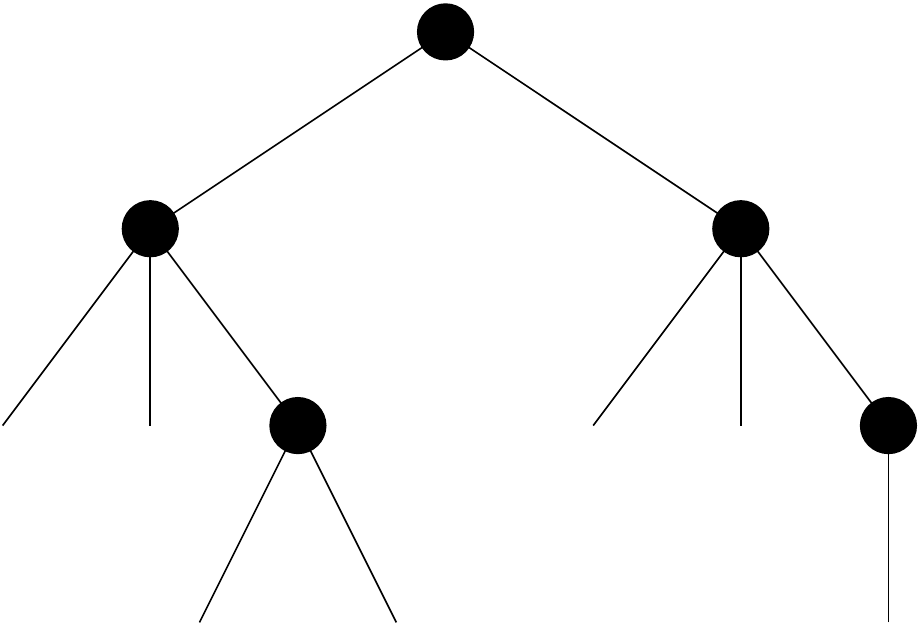}
      \label{sub:first}
                         }

    \subfloat[Result of the second (and last) iteration]{
      \includegraphics[height=2.4cm]{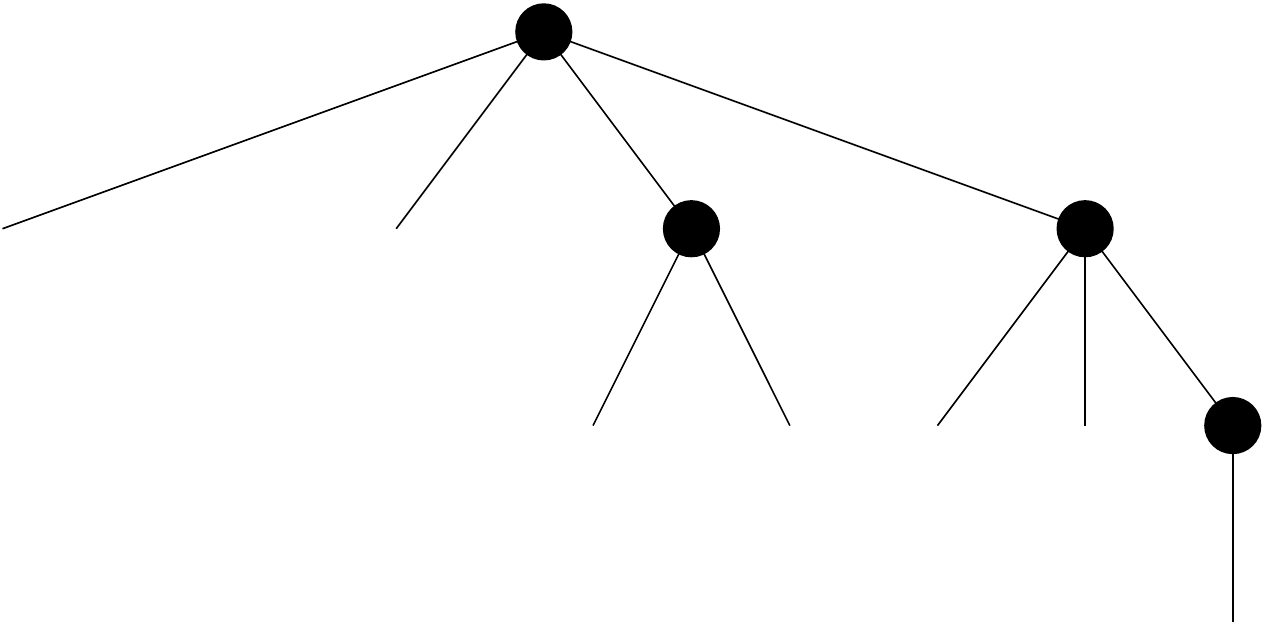}
      \label{sub:second}
                         }
    \caption{Derivation of a tree structure having its leaves at all the levels from a classic binary tree that processes a message of 7 blocks}
    \label{Derivation_tree}
  \end{center}
\end{figure}

~\\

\paragraph{\textbf{Remark}} Assuming a message of length $l$ blocks, the parallel running time of $\lceil \log_2 l \rceil+1$ is optimal. 
Indeed, this is clearly true for a message of 4 blocks which requires 3 units of time.
 Let us suppose that the running time of $k + 1$ is optimal for a message of length $2^k$. If we cannot process more than $2^k$ blocks in $k+1$ units of time, processing
 $2^k$ more blocks requires at least one more unit of time.
 Thus, the running time of $k+2$ is still optimal for a message of length $2^{k+1}$.~\\

 \begin{figure}[!t]
\centering
\scalebox{0.55}{
 \input{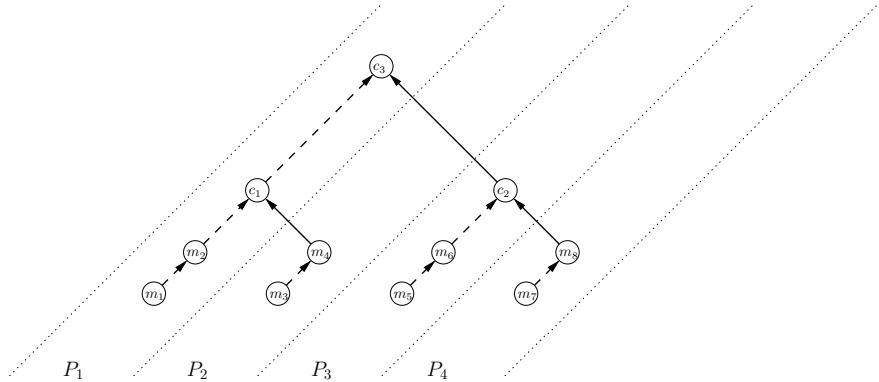}
 }
 \caption{Example of the processing of 8 blocks $m_1$, $m_2$, ..., $m_8$, using another tree representation. 
 We have 4 processors denoted $P_1$, $P_2$, $P_3$ and $P_4$. A dotted line
 represents a serial computation using the same hash context, while a solid line indicates that a hash state is used by another hash context. The encircled message blocks
 or chaining values that are connected with a dotted line are in the same $f$-input. For instance,
 the processor $P_2$ computes the hash of $m_3 \| m_4$, denoted $c_1$. The chaining values $c_1$ and $c_3$ are used by the hash context of the processor $P_1$.
 The parallel running time to compute the root node is equal to the running time required for computing the hash of $m_1\|m_2\|c_1\|c_3$, \emph{i.e.}, 4 units of time.}
 \label{another_tree_representation}
\end{figure}


\paragraph{\textbf{Performances improvements}} The (parallel) running time of such a tree is to be compared with the running time of
an optimal tree having its leaves at the same depth \cite{AR15}, \emph{i.e.} approximately $3 \lceil \log_3 l \rceil$. This represents, approximately,
a {\texttt 2x} speedup.~\\


\textbf{What if we apply the algorithm above on a ternary tree to construct another tree?} The transformed tree would lead to a parallel running time of at most
$(\lceil \log_3 l \rceil-1) \cdot 2 + 3 = 2\lceil \log_3 l \rceil+1$. More precisely, if the ternary tree has a root node of arity 3, then the 
hash function based on the transformed tree has a parallel running time of exactly $2\lceil \log_3 l \rceil+1$. Otherwise, if it is of arity 2, the
transformed tree leads to a parallel running time of exactly $2\lceil \log_3 l \rceil$.

For a large message length $l$, we have  $\lceil \log_2 l \rceil + 1 < 2\lceil \log_3 l \rceil$. It is thus more interesting to use the topology derived from a 
binary tree. For a finite and small number of $l$, the tree topology derived from a ternary tree gives the same running time. For these message lengths,
such a topology is preferable since it decreases the number of involved processors. For the reasons outlined below, deriving a topology from a quaternary tree or 
any tree of arity greater than 4 worsen the parallel running time.~\\


\textbf{Can we further decrease the number of processors while conserving the running time stated in the theorem above?} To do so, we should be able to increase the number of 
nodes or message blocks processed by one processor during one parallel step. Let us see a counter-example. Suppose that we 
have a hash tree that can be processed in a parallel running time of $\lceil \log_2 l \rceil+1$ and that one node in this tree has more than 3 leaves, say $x$ leaves.
We have $\lceil \log_2 x \rceil+1 < x$ when $x > 3$, meaning that we can transform this node in order to improve the overall running time.

%

\begin{theorem}
 Let $l$ be the number of blocks of the message and let $i$ be the biggest integer such that $2^i < l$. 
 The optimal parallel running time can be reached using only $\lceil l/3 \rceil$ processors 
 if $2^i<l\leq3 \cdot 2^{i-1}$ and $\lceil l/2 \rceil$ processors if $3 \cdot 2^{i-1}< l \leq 2^{i+1}$.
\end{theorem}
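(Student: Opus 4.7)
My plan is first to note that Theorem~\ref{Theorem_opt} fixes the optimal parallel running time at $T = \lceil \log_2 l \rceil + 1$, which under the hypothesis $2^i < l \leq 2^{i+1}$ equals $i + 2$. For each of the two ranges of $l$ in the statement, it then suffices to exhibit a concrete hash-tree schedule that attains $T$ units using the stated number of processors.

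For the range $2^i < l \leq 3 \cdot 2^{i-1}$, I would set $p = \lceil l/3 \rceil$ and partition the message into $p$ consecutive groups of at most $3$ blocks. Each processor first hashes its own group in at most $3$ time units, outputting a chaining value. The $p$ chaining values are then combined by the same processors using the binary aggregation-with-kangaroo-hopping strategy from the proof of Theorem~\ref{Theorem_opt}, applied now to $p$ chaining values in place of $l$ message blocks: each additional unit of time absorbs one level of the aggregation tree. Since the hypothesis $l \leq 3 \cdot 2^{i-1}$ forces $p \leq 2^{i-1}$, the aggregation tree has depth at most $\lceil \log_2 p \rceil \leq i-1$, giving a total of at most $3 + (i-1) = i+2 = T$. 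For the range $3 \cdot 2^{i-1} < l \leq 2^{i+1}$, I would invoke Theorem~\ref{Theorem_opt} itself, whose construction uses $\lceil l/2 \rceil$ processors, $2$ units at the leaves, and an aggregation of depth $\lceil \log_2 \lceil l/2 \rceil \rceil \leq i$, for a total of $2 + i = i+2 = T$.

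The step I expect to require the most care is verifying the Case~1 schedule when $l$ is not a multiple of $3$ and the partition is uneven, so that some leaf processors finish at time $3$ while others finish earlier at time $2$. The fix is to assign a full $3$-block group to the root processor (whenever one exists), guaranteeing that it is ready to start absorbing chaining values exactly at time $3$; the binary aggregation tree above can then be arranged so that each chaining value arrives exactly one time unit after the one absorbed at the previous level, mirroring the level-by-level analysis already carried out in the proof of Theorem~\ref{Theorem_opt}. Once this timing is checked, the construction yields $\lceil l/3 \rceil$ processors in the first range and Theorem~\ref{Theorem_opt} supplies $\lceil l/2 \rceil$ in the second, completing the argument.
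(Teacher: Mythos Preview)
Your proposal is correct and follows essentially the same construction as the paper: replace the two leaves per node in the derived tree of Theorem~\ref{Theorem_opt} by three leaves per node, so that $\lceil l/3\rceil$ processors spend $3$ units on leaves and then absorb chaining values along a binary aggregation of depth $\lceil\log_2\lceil l/3\rceil\rceil$, and check this meets the optimum $i+2$ exactly on the first interval. The only difference is presentational---the paper derives the two intervals by solving the inequality $\lceil\log_2(l/3)\rceil+3\le\lceil\log_2(l/2)\rceil+2$ via the integer and fractional parts of $\log_2 l$, whereas you take the intervals as given and verify each case directly.
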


\begin{proof}
 We just allow the derived tree (in the proof above) to have $3$ leaves per node, instead of $2$. We recall that the parallel running time 
 is $\lceil \log_2(l/2) \rceil+2$ with $2$ leaves per node, whereas it is $\lceil \log_2(l/3) \rceil+3$ using $3$ leaves per node. If 
 we have $\lceil \log_2(l/3) \rceil+3 \leq \lceil \log_2(l/2) \rceil+2$
 for a given $l$, the second tree structure should be used to decrease the number of processors to $\lceil l/3 \rceil$.  
 We now determine the range of values of $l$ for which $\lceil \log_2(l/3) \rceil+2 \leq \lceil \log_2 l \rceil$.
 Let us set $u=\log_2 l$. We rewrite the inequality 
 $\lceil \log_2(l/3) \rceil+2 \leq \lceil \log_2 l \rceil$
 as  
 \begin{equation}\label{ineq_lab}
  \lceil i + f -\log_2(3) +2 \rceil \leq \lceil i + f \rceil
 \end{equation}
 where $f$ is the fractional part of $u$ and $i$ is its integer part. Since $2-\log_2(3)>0$, $f$ is necessarily non-zero.
 Thus, Inequality (\ref{ineq_lab}) is satisfied iff $0 < f \leq \log_2 3 - 1$, \emph{i.e.} iff $i < \log_2 l \leq i + \log_2(3/2)$. This
 leads to the expected intervals of validity.
 
\end{proof}

Another question is the optimal parallel running time that can be obtained using a fixed number of processors.

\begin{theorem}
 Let $l$ be the number of blocks of the message and let $P$ be the number of processors. 
 There exists a mode having an optimal parallel running time of $\lceil l/P \rceil$ + $\lceil \log_2 P \rceil$ units of time.
\end{theorem}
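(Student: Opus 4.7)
The plan is to apply the binary kangaroo-hopping tree construction of Theorem~\ref{Theorem_opt} to \emph{chunks} of size $\lceil l/P \rceil$ rather than to pairs of individual blocks. Partition the $l$ message blocks into $P$ contiguous chunks of size at most $\lceil l/P \rceil$ and assign chunk $j$ to processor $P_j$. In Phase~1 each processor sequentially absorbs the blocks of its chunk into its own $f$-invocation without finalizing; this costs $\lceil l/P \rceil$ units of time in parallel and leaves every processor with an open hash state ready for kangaroo hopping.

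In Phase~2 I would overlay on top of the $P$ open states a balanced binary tree of height $\lceil \log_2 P \rceil$. At each level the surviving processors pair up, and the ``left'' processor in each pair absorbs its partner's chaining value as the next input of its still-open $f$-invocation, which costs a single unit of time per level by the kangaroo trick. Iterating for $\lceil \log_2 P \rceil$ levels reduces everything to a single root value. Phase~1 and Phase~2 together take $\lceil l/P \rceil + \lceil \log_2 P \rceil$ units using exactly $P$ processors, which settles achievability. Consistency checks against the known cases are immediate: $P=\lceil l/2\rceil$ recovers the $\lceil \log_2 l \rceil+1$ bound of Theorem~\ref{Theorem_opt}, $P=1$ gives the sequential time $l$, and $P=l$ gives $1 + \lceil \log_2 l \rceil$.

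For the matching lower bound, the work bound $T \geq \lceil l/P \rceil$ follows directly from the rate-one assumption on $f$ and the fact that each processor can absorb at most one input per unit of time, while the logarithmic term follows by the inductive argument appearing in the remark after Theorem~\ref{Theorem_opt}: reducing $P$ independent chaining values to one through a DAG whose parallel width is capped at $P$ processors requires depth at least $\lceil \log_2 P \rceil$. The step I expect to be the main obstacle is arguing that these two terms are \emph{additive} rather than merely bounded by their maximum, i.e.\ that one cannot profitably skew the chunk sizes so that small chunks finish early and start merging in parallel with the longer chunks. I would handle this with an exchange argument on chunk sizes: shrinking one chunk necessarily enlarges another, so on the critical path the largest chunk always enforces at least $\lceil l/P \rceil$ units of time before the full $\lceil \log_2 P \rceil$ merging levels can complete.
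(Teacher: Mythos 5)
Your construction is essentially the paper's own proof: the paper obtains the same two-phase scheme by taking the unrestricted-height construction for a message of $2P$ blocks (time $\lceil\log_2 P\rceil+2$ with $P$ processors, each first absorbing two blocks) and replacing each processor's two initial blocks by a chunk of $\lceil l/P\rceil$ or $\lfloor l/P\rfloor$ blocks, which is exactly your Phase~1 followed by the binary kangaroo-hopping reduction of depth $\lceil\log_2 P\rceil$. Be aware that the paper's proof establishes only achievability and makes no attempt at the matching lower bound, so the optimality argument you sketch (and honestly flag as incomplete on the additivity point) is extra material rather than something you need to reproduce to match the paper.
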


\begin{proof}
 Let us consider a message of $2P$ blocks. According to Theorem \ref{Theorem_opt}, it can be hashed in a parallel running time 
 of $\lceil \log_2 2P \rceil+1$ ($=\lceil \log_2 P \rceil+2$) units of time. During the first two units of time, each processor processes $2$ blocks of the message.
 We thus replace these $2$ blocks by at most $\lceil l/P \rceil$ blocks that can be hashed sequentially in at most $\lceil l/P \rceil$ units of time.
 Since there always exists two integers $a \geq 0$ and $b \geq 0$ such that $a + b = P$ and
$a\lceil l/P \rceil + b \lfloor l/P \rfloor = l$, we conclude the result.
\end{proof}

\subsection{Case $d>1$}

\begin{theorem}\label{Theorem_opt}
 Let a message of length $l$ blocks. We can construct a hash tree allowing a parallel running time of exactly $\lceil \log_{(d+1)} (l/2) \rceil+2$ 
 units of time, using $\lceil l/2 \rceil$ processors.
\end{theorem}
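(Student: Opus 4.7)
The plan is to mimic the $d=1$ construction while exploiting the fact that, when $d>1$, one input block of the inner function $f$ fits exactly $d$ chaining values. Once a processor has hashed its two leaf-blocks in $2$ units of time, every subsequent unit lets it absorb one block packing up to $d$ digests produced by $d$ distinct child sub-trees, in place of the single digest that was absorbed per step in the $d=1$ case. The resulting tree is self-similar, and I would describe it recursively rather than as a transformation of a higher-arity tree.

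Define $T_2$ to be a single node with two leaves and, for $h\geq 3$, let $T_h$ be a root with two leaves together with $d$ disjoint copies of $T_k$ as sub-trees for every $k\in\{2,\ldots,h-1\}$. An easy induction on the recursion $L(h)=2+d\sum_{k=2}^{h-1}L(k)$, with $L(2)=2$, gives $L(h)=2(d+1)^{h-2}$, so choosing $h=\lceil\log_{d+1}(l/2)\rceil+2$ accommodates $l$ blocks, the excess leaves being removed from the rightmost sub-trees in the usual way. Similarly, the recursion $P(h)=1+d\sum_{k=2}^{h-1}P(k)$ with $P(2)=1$ yields $P(h)=(d+1)^{h-2}=L(h)/2$, which, after truncation to exactly $l$ leaves, matches the announced count of $\lceil l/2\rceil$ leaf-holding nodes.

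The scheduling argument would then go by induction on $h$. The base $h=2$ is immediate. For $h\geq 3$, the processor in charge of a $T_h$ spends its first two units of time on its two leaves; in step $j\in\{3,\ldots,h\}$ it absorbs one block packing the $d$ chaining values produced by its $d$ copies of $T_{j-1}$, which by the induction hypothesis are ready exactly at the end of step $j-1$. In parallel, every descendant processor does the same for its own sub-trees. After step $h$ the root is finalized, yielding the claimed running time of $\lceil\log_{d+1}(l/2)\rceil+2$ units of time.

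The main obstacle I expect is not the arithmetic but the careful synchronization bookkeeping: one has to check that the $d$ chaining values absorbed together at a given step all come from sub-trees of exactly the same depth, which is precisely what the recursive definition ensures by grouping the $d$ children of each depth into one absorbed block. A secondary, more pedestrian point is handling $l\ne 2(d+1)^{h-2}$: one must verify that the choice of $h$ is tight enough that at least one depth-$(h-1)$ sub-tree survives truncation, so that the root still needs to absorb at step $h$ and the running time is exactly (not merely at most) $\lceil\log_{d+1}(l/2)\rceil+2$.
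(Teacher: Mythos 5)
Your proposal is correct and is essentially the paper's own construction: the paper builds the same tree bottom-up by repeatedly multiplying the processor count by $d+1$ (each leader hashes $2$ blocks in $2$ units, then absorbs $d$ chaining values per subsequent unit, the values absorbed at step $j$ coming from sub-structures completing at step $j-1$), which is exactly your recursive family $T_h$ with $L(h)=2(d+1)^{h-2}$. Your version is merely a more explicit formalization, and the truncation/tightness caveats you flag are real but are glossed over in the paper as well.
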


\begin{proof}
 First, we observe that $2(d+1)$ blocks of the message can be compressed in $3$ units of times, using $d+1$ processors. 
 Indeed, $d+1$ processors can each compress 2 blocks, and the first one can continue the evaluation of its hash function by processing the chaining values 
 produced by the $d$ other processors.
 Given a hash state,
 we can compress $d$ subsequent chaining values in one unit of time. Thus, we can compress $d+1$ times more blocks (\emph{i.e.} $3(d+1)^2$ in total)
 in one more unit of time, and by using $d+1$ times more processors. Repeating this recursively, we obtain a single chaining value (the root node) at an iteration $k$. 
 It appears that $k$ is the smallest integer satisfying the inequality $2(d+1)^k \geq l$. The total parallel running time then corresponds to the time required by 
 the most loaded processor: the running time to process two blocks of the message, in addition to the running time to process at most $dk$ chaining values,
 \emph{i.e.} $k$ units of time, yielding the expected result.
\end{proof}
~\\
Note that $\forall d \geq 1 \forall x > 3$, we have $\lceil \log_{d+1}(x) - \log_{d+1}(2) \rceil + 2 < x$, meaning that more than 3 leaves per node lead to a suboptimal
parallel time.

\begin{theorem}
 Let $l$ be the number of blocks of the message and let $i$ be the biggest integer such that $2(d+1)^i < l$. 
 The optimal parallel running time can be reached using only $\lceil l/3 \rceil$ processors 
 if $2(d+1)^i<l\leq 3(d+1)^i$ and $\lceil l/2 \rceil$ processors if $3(d+1)^i< l \leq 2(d+1)^{i+1}$.
\end{theorem}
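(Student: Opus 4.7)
The plan is to follow the same strategy as for the analogous statement in the $d=1$ case: build a second candidate mode that uses $3$ leaves per bottom node rather than $2$, compute its parallel running time, and determine the range of $l$ on which its running time still coincides with the optimum $T^* = \lceil \log_{d+1}(l/2) \rceil + 2$ established in Theorem~\ref{Theorem_opt}.

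First I would adapt the recursive construction of Theorem~\ref{Theorem_opt}: at the bottom level, each node absorbs $3$ message blocks sequentially (taking $3$ units of time); then the exact same recursion as before applies, since a hash context can keep absorbing $d$ chaining values per additional unit of time. Counting the nodes gives $\lceil l/3 \rceil$ processors, and the running time of the most loaded branch becomes $\lceil \log_{d+1}(l/3) \rceil + 3$.

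Next I would compare the two variants. The $3$-leaves construction preserves optimality exactly when
\[
\lceil \log_{d+1}(l/3) \rceil + 3 \;\leq\; \lceil \log_{d+1}(l/2) \rceil + 2.
\]
With $i$ defined as in the statement, $2(d+1)^i < l \leq 2(d+1)^{i+1}$ forces $\lceil \log_{d+1}(l/2) \rceil = i+1$, so the right-hand side equals $i+3$. On the left, $\lceil \log_{d+1}(l/3) \rceil \leq i$ holds iff $l \leq 3(d+1)^i$. Hence on $2(d+1)^i < l \leq 3(d+1)^i$ the $3$-leaves mode achieves the optimum with only $\lceil l/3 \rceil$ processors, while on $3(d+1)^i < l \leq 2(d+1)^{i+1}$ it costs $i+4$, which is strictly worse; there we must fall back to the $\lceil l/2 \rceil$-processor mode of Theorem~\ref{Theorem_opt}.

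Finally, the remark stated just before the theorem, namely that $\lceil \log_{d+1}(x) - \log_{d+1}(2) \rceil + 2 < x$ whenever $x > 3$, rules out using $x \geq 4$ leaves per bottom node as a means of reducing processors further: such a choice would already push the running time above the optimum. The only real obstacle is careful handling of the ceiling function at the boundary $l = 3(d+1)^i$, but this reduces to the elementary equivalence $\lceil \log_{d+1}(l/3) \rceil \leq i \iff l \leq 3(d+1)^i$, so no delicate estimation is needed.
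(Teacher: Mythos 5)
Your proposal is correct and follows essentially the same route as the paper: construct the $3$-leaves-per-node variant with running time $\lceil \log_{d+1}(l/3)\rceil+3$, and determine when this matches the optimum $\lceil \log_{d+1}(l/2)\rceil+2$ by resolving the ceiling inequality. The only (cosmetic) difference is that you resolve that inequality directly via the integer $i$ of the statement, whereas the paper works with the fractional part of $\log_{d+1}(l/2)$; both yield the same intervals.
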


\begin{proof}
 A tree with 3 leaves per node is preferable if $\lceil \log_{(d+1)}(l/3) \rceil + 3 \leq \lceil \log_{(d+1)}(l/2) \rceil + 2$.
 We have to determine the range of values $l$ which fulfil this inequality.
 Let us set $u=\log_{(d+1)} (l/2)$. We rewrite the inequality 
 $\lceil \log_{(d+1)}(l/3) \rceil+1 \leq \lceil \log_{(d+1)} (l/2) \rceil$
 as  
 \begin{equation}\label{ineq_lab}
  \lceil i + f +\log_{(d+1)}(2/3) + 1 \rceil \leq \lceil i + f \rceil
 \end{equation}
 where $f$ and $i$ are respectively the fractional part and the integer part of $u$. Since $\log_{(d+1)}(2/3) + 1>0$ for all $d \geq 1$, $f$ is necessarily non-zero.
 Thus, Inequality~(\ref{ineq_lab}) is satisfied iff $0 < f \leq -\log_{(d+1)}(2/3)$, \emph{i.e.} iff $i < \log_{(d+1)} (l/2) \leq i - \log_{(d+1)}(2/3)$. This
 leads to the expected intervals of validity.
 
\end{proof}

\begin{theorem}
 Let $l$ be the number of blocks of the message and let $P$ be the number of processors. 
 There exists a mode having an optimal parallel running time of at most $\lceil l/P \rceil$ + $\lceil \log_{(d+1)} (P) \rceil$ units of time.
\end{theorem}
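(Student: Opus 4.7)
The plan is to reduce to Theorem~\ref{Theorem_opt} for the case $d>1$ in exactly the same manner as was done for $d=1$. The earlier argument takes a ``template'' message of length $2P$ and then inflates each of its initial $2$-block chunks into a longer sequential chunk; the same template works here, the only change being the branching factor of the balanced upper part.

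Concretely, I would first apply Theorem~\ref{Theorem_opt} to a fictitious message of length $2P$ blocks. This yields a mode using exactly $P$ processors and running in $\lceil \log_{(d+1)}(P) \rceil + 2$ units of time. Inspecting its construction, each processor $P_i$ first evaluates the inner function on $2$ message blocks (contributing $2$ units of time to its critical path), and then one distinguished processor continues by absorbing chaining values produced by the others, contributing the remaining $\lceil \log_{(d+1)}(P) \rceil$ units through a $(d{+}1)$-ary balanced structure.

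Next I would replace, for each processor, the initial pair of blocks by an actual chunk of the real message. Since there exist non-negative integers $a,b$ with $a+b = P$ and $a\lceil l/P \rceil + b\lfloor l/P \rfloor = l$, we can partition the $l$ blocks into $P$ consecutive chunks, each of size at most $\lceil l/P \rceil$. Processor $P_i$ sequentially hashes its chunk using the inner function in at most $\lceil l/P \rceil$ units of time, producing a single chaining value (or, for the distinguished processor, an unfinalized state that will absorb incoming chaining values exactly as in Theorem~\ref{Theorem_opt}). Because the upper part of the tree is unchanged, its critical path still costs $\lceil \log_{(d+1)}(P) \rceil$ units, so the total parallel running time is at most
\[
\lceil l/P \rceil + \lceil \log_{(d+1)}(P) \rceil.
\]

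There is no serious obstacle: the proof is essentially a black-box substitution into the construction of Theorem~\ref{Theorem_opt}. The one point worth verifying carefully is that the distinguished processor's critical path remains $\lceil l/P \rceil + \lceil \log_{(d+1)}(P) \rceil$ and not larger; this holds because the chaining values produced by the other $P-1$ processors become available no later than $\lceil l/P \rceil$ units of time after the start, which is precisely when the distinguished processor finishes its own chunk and is ready to absorb them, so the ``alignment'' used in Theorem~\ref{Theorem_opt} is preserved.
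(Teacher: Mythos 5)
Your proposal is correct and follows essentially the same route as the paper: take the Theorem~\ref{Theorem_opt} construction for a template message of $2P$ blocks, then inflate each processor's initial $2$-block chunk to at most $\lceil l/P\rceil$ blocks using the $a+b=P$, $a\lceil l/P\rceil + b\lfloor l/P\rfloor = l$ decomposition. Your added check that the chaining values remain available exactly when the distinguished processors are ready to absorb them is a point the paper leaves implicit, and it is verified correctly.
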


\begin{proof}
This theorem follows immediately from the previous one. We replace the size $l$ by $2P$. This message is then compressed in
$2 + \lceil \log_{(d+1)} P \rceil$ units of time. In this scheme, each processor starts by compressing two blocks of the message. If we replace these $2$
blocks by $\lceil l/P \rceil$ blocks, this means that we can compress a message of at most $\lceil \frac{l}{P} \rceil P$ blocks 
in $\lceil l/P \rceil + \lceil \log_{(d+1)} P \rceil$ units of time. Since there always exists integers $a \geq 0$ and $b \geq 0$ such that $a + b = P$ and
$a\lceil l/P \rceil + b \lfloor l/P \rfloor = l$, we conclude the result.
\end{proof}

\bibliographystyle{plain}
\bibliography{trees}

\end{document}